
\documentclass{IET}


\usepackage{graphicx,epsfig,epstopdf,amsmath,amssymb,enumerate} 
\usepackage{cite}  
\usepackage{algpseudocode}
\usepackage{amsthm}
\newtheorem{theorem}{Theorem}
\newtheorem{lemma}[theorem]{Lemma}
\newtheorem*{theorem*}{Theorem}
\newtheorem*{lemma*}{Lemma} 



\begin{document}

\title{Single Cell and Multi-cell Performance Analysis of OFDM Index Modulation}

\author[1,*]{Shangbin Wu}
\affil{Samsung R\&D Institute UK}

\author[1]{Maziar Nekovee}

\affil[*]{shangbin.wu@samsung.com}

\abstract{This paper addresses the achievable rate of single cell and sum rate of multi-cell orthogonal frequency division multiplexing (OFDM) index modulation (IM). The single cell achievable rate of OFDM-IM with Gaussian input is calculated using a multi-ary symmetric channel. Then, the cumulative distribution function (CDF) of multi-cell OFDM-IM is investigated by stochastic geometry. Furthermore, it is proved in this paper that the probability density function (PDF) of noise plus intercell-interference (ICI) in multi-cell OFDM-IM with quadrature amplitude modulation (QAM) follows a mixture of Gaussians (MoG) distribution. Next, parameters of the MoG distribution are estimated using a simplified expectation maximization (EM) algorithm. Upper bound of sum rates of multi-cell OFDM-IM is derived. Furthermore, analytic and simulated results are compared and discussed.}

\maketitle

\section{Introduction}
The fifth generation (5G) cellular network is emerging to satisfy the unprecedented growth in data traffic and the number of connected devices. A key performance indicator (KPI) of 5G is the ability to provide smooth quality of user experience at cell edges, which requires above 1 Gbps data rates. Recently, orthogonal frequency division multiplexing (OFDM) index modulation (IM) \cite{Alhiga09} was proposed as one of the 5G enabling technologies, as it has advantages such as increase in per subcarrier transmit power and reduction in inter-cell interference (ICI). It was reported in \cite{Hong14} and \cite{Seol09} that the ICI of legacy OFDM networks follows a Gaussian distribution which caused most throughput degradation. However, the ICI of OFDM-IM is not Gaussian distributed. Therefore, OFDM-IM is able to provide room to optimize data rate at cell edges. The philosophy behind OFDM-IM is that only one of a number of OFDM subcarriers is active when transmitting symbols. In addition to the information carried by the transmitted symbol, the subcarrier index can also be used to convey information. The idea of conveying information via indexes was first proposed in the spatial domain, i.e., spatial modulation (SM) \cite{Mesleh08}. A variant version of OFDM-IM was proposed in \cite{Tsonev11}, where bits were divided into blocks of bits before the OFDM-IM modulator. Authors in \cite{Basar13} studied practical implementation issues of OFDM-IM such as maximum likelihood (ML) detector, log-likelihood ratio (LLR) detector, and impact of channel estimation errors. A low complexity ML detector for OFDM with in-phase/quadrature IM was discussed in \cite{Zheng15}, which was implemented with a priori knowledge of noise variance. In \cite{Xiao14}, additional interleaving was introduced to subcarriers in correlated channels, in order to provide extra diversity gain to the OFDM-IM. Later, the combination of OFDM-IM and SM was proposed in \cite{Datta15}, where the symbol domain, subcarrier domain, and spatial domain formed a three dimensional signal space. Then, the generalized OFDM-IM was introduced in \cite{Fan15}, where multiple subcarriers were active. The information conveyed by the index relies on different combinations of active subcarrier indexes. When the transmitted symbols of OFDM-IM are quadrature-amplitude modulation (QAM) symbols, this type of OFDM-IM was named frequency QAM (FQAM) \cite{Hong14}. Studies of OFDM--IM in \cite{Alhiga09},\cite{Hong14}, \cite{Basar13}--\hspace{-0.01cm}\cite{Datta15} focused on bit error rate (BER) and frame error rate (FER) performance. 

Achievable rate of OFDM--IM with different settings were reported in \cite{Wen15}--\hspace{-0.001cm}\cite{Ishikawa16}. OFDM-IM with finite constellation input was discussed in \cite{Wen15}, where a closed-form lower bound of the achievable rate was derived. The application of OFDM-IM for underwater acoustic communications as well as achievable rate with finite constellation input were investigated in \cite{Wen16}. Although achievable rate of OFDM-IM with Gaussian input was investigated in \cite{Ishikawa16}, closed-form expression was not provided in \cite{Ishikawa16}.

Also, the performance of OFDM-IM in multi-cell scenarios has been less studied. System level simulations (SLSs) over typical hexagonal multi-cell network are able to provide certain insights of multi-cell performance of wireless networks with OFDM-IM. However, there are two drawbacks of SLSs. First, SLSs are time consuming. Second, the hexagonal multi-cell network layout is usually not fulfilled in realistic situations, where base stations (BSs) are approximately distributed in a uniform manner. Therefore, it is beneficial to have analytic results on the performance of multi-cell scenarios. This can be investigated via a mathematics tool called stochastic geometry \cite{Baccelli_v1}--\hspace{-0.001cm}\cite{Baccelli15}, where BSs were assumed to be distributed following a Poisson Point Process (PPP). In this case, the cumulative distribution function (CDF) of the signal to interference plus noise (SINR) was expressed in closed form in different scenarios, such as ad hoc networks \cite{Baccelli06}, cellular networks \cite{Andrews11}, and cooperative networks \cite{Baccelli15}.

Authors in \cite{Hong14} studied the statistics of ICI in multi-cell OFDM-IM with QAM inputs. The generalized Gaussian distribution was used to approximate the distribution of noise plus ICI. However, the exact distribution of noise plus ICI of multi-cell OFDM-IM with QAM inputs was missing in the literature. In this paper, this exact distribution will be found.

The contributions of this paper are listed as below:
\begin{enumerate}
\item The subcarrier index detection error probability of single cell OFDM-IM with Gaussian input is conducted. Then, closed-form single cell achievable rate of OFDM-IM with Gaussian input is derived. 

\item The CDF of SINR of multi-cell OFDM-IM is derived using stochastic geometry.

\item The distribution of ICI of multi-cell OFDM-IM with QAM input is derived, showing that it follows a mixture of Gaussians (MoG) distribution. In addition, the parameters of the probability density function (PDF) of ICI are computed using a simplified expectation maximization (EM) algorithm. Then, the upper bound of sum rates of multi-cell OFDM-IM with QAM input is studied.

\end{enumerate}

The rest of this paper is structured as follows. Section~\ref{sec_System_Model} gives a general description of the system model. Achievable rate of single cell OFDM-IM with Gaussian input will be investigated in Section~\ref{sec_Single_Link_IM}. Section \ref{sec_Multi_Cell_IM} will study the CDF of SINR of multi-cell OFDM-IM with stochastic geometry. Also, the distribution and its parameters of ICI are analyzed. Simulation results and analysis are presented in Section \ref{sec_results_analysis}. Conclusions are drawn in Section~\ref{sec_conclusion_section}.

\section{System Model} \label{sec_System_Model}
Let us consider a downlink multi-cell network using OFDM-IM, where a target user equipment (UE) is located at the origin. BSs are distributed as a homogeneous PPP with density $\lambda$.
The set of all BSs is denoted as $\mathcal{S}$ and the set of BSs interfering the target UE is denoted as $\mathcal{S}'$. Let $\alpha$ be the pathloss coefficient. Assume that the OFDM-IM system has $N_\mathrm{F}$ subcarriers, then $\log_2 N_\mathrm{F}$ bits are conveyed by subcarrier indexes.
$N_\mathrm{B}$ is the number of BSs.
Let $F$ be the subcarrier index, which is a uniformly distributed random variable defined on $\left\lbrace 1, 2, \cdots,  {N_\mathrm{F}}\right\rbrace$ and let $\mathcal{H}_\xi=\left\lbrace h_{\xi,1}, h_{\xi,2}, \cdots, h_{{\xi,N_{\mathrm{F}}}} \right\rbrace$ be the set of all channel coefficients from the $\xi$th BS to the target user on these subcarriers. The channel coefficients $h_{\xi,k}$ ($1\leqslant k\leqslant N_\mathrm{F}$) are independently and identically distributed (i.i.d.) zero mean unit variance complex Gaussian random variables. In practice, this i.i.d. channel assumption can be achieved by introducing interleaving between subcarriers as \cite{Xiao14}. The interleaving can be done via a pseudo random sequence, which is shared by the BS and UE, such that the BS and UE can map or de--map between the original subcarrier indices and the interleaved subcarrier indices. Throughout the paper, we assume that the target UE has perfect knowledge of the channel coefficients from the associated BS but no knowledge from other BSs. Let the $\xi$th BS be the associated BS of the target UE. The distance between the $\xi$th BS and the target UE is $d$ and the distance between the $l$th BS and the target UE is $d_l$ ($\forall l\neq \xi$). Then, the received signal $Y$ of the target UE can be expressed as
\begin{align}
Y=\sqrt{T_\xi}H_\xi X_\xi+N+I,
\label{equ_rx_signal}
\end{align}
where $N$ is a zero mean complex Gaussian noise with variance $\sigma^2_\mathrm{N}$, $H_\xi$ is uniformly distributed random variable defined on $\mathcal{H}_\xi$, $X_\xi$ is the transmitted symbol from the $\xi$th BS to the target user, $T_\xi$ is the average received power (including transmit power, path loss, and shadow fading) from the $\xi$th BS to the target UE, and $I$ is the interference from other BSs. Thus, interference $I$ can be written as
\begin{align}
I=\sum\limits_{l\neq \xi}\sqrt{T_l}H_lX_l\zeta_l,
\label{equ_interference}
\end{align}
where $\zeta_l=1$ if the $l$th BS is transmitting on the same subcarrier as the $\xi$th BS and $\zeta_l=0$ if the $l$th BS is not transmitting on the same subcarrier as the $\xi$th BS. This is due to a basic property of OFDM-IM, which activates only one subcarrier in each transmission period.

\section{Single Cell OFDM-IM} \label{sec_Single_Link_IM}
Achievable rate of OFDM-IM with QAM input and other finite constellation inputs can be found in \cite{Hong14}, \cite{Wen15}--\hspace{-0.001cm}\cite{Ishikawa16}. However, closed-form expression for  achievable rate of sing-cell OFDM-IM with Gaussian input is missing in the literature. Therefore, to fill this gap, single cell achievable rate of OFDM-IM with Gaussian input is analyzed in this section.  Since single cell is considered, subscript $\xi $ representing the $\xi$th BS is dropped for brevity and the interference term $I$ equals $0$. Information of OFDM-IM is conveyed by the symbol $X$ and the frequency index $F$. The achievable rate $r$ in this paper is defined by the average maximum achievable mutual information between the information source $(X,F)$ and the destination $Y$, which can be characterized as
\begin{align}
r&=\mathrm{E}\left[\max I(X,F;Y) \right]\nonumber\\
&\approx \mathrm{E}\left[\max I(X;Y|F) \right]+\mathrm{E}\left[\max I(F;Y) \right]=r_1+r_2.
\end{align}

Since the channel coefficient $H$ is determined once the subcarrier index $F$ is determined, the achievable rate of the single cell OFDM-IM generated by the symbol in terms of signal to noise ratio (SNR) $\rho$ can be calculated using the achievable rate of the Rayleigh fading channel, i.e.,
\begin{align}
r_1(\rho)&=\mathrm{E}\left[\max I(X;Y|F) \right]=\mathrm{E}\left[\max I(X;Y|H) \right]\nonumber\\
&=\int\limits_0^{\infty}\log_2\left(1+z\rho \right)e^{-z}dz=-\frac{1}{\mathrm{ln}2}\mathrm{Ei}\left(-\frac{1}{\rho} \right)\exp(\frac{1}{\rho}),
\label{equ_r1}
\end{align}
where $\mathrm{Ei}\left(\cdot \right)$ is the exponential integral \cite{TableIntegral} defined by $\mathrm{Ei}\left(-z \right)=-\int^{\infty}_{z}\frac{e^{-t}}{t}dt.$ The SNR $\rho$ can easily be controlled by adjusting the BS transit power to compensate path loss and shadow fading in a single cell scenario.

The calculation of $r_2$ is equivalent to determining how much information is retrieved from $Y$ when the information is conveyed on a certain subcarrier $F$. The information retrieving process is not perfect because of the existence of noise. As a result, the subcarrier index may be incorrectly detected. Let $\hat{F}$ be the detected subcarrier index and $P_{F\neq \hat{F}}$ be the subcarrier index detection error probability and denote $P_{F=k \cap \hat{F}=l}$ as the probability that the $k$th subcarrier is used whereas the $l$th subcarrier is detected. With the assumption that the channel coefficients on subcarriers are i.i.d., it can be observed that $P_{F=f_k \cap \hat{F}=f_l}=\frac{P_{F\neq \hat{F}}}{N_\mathrm{F}-1}$ ($\forall l\neq k$).
\begin{figure} 
\centering\includegraphics[width=4in]{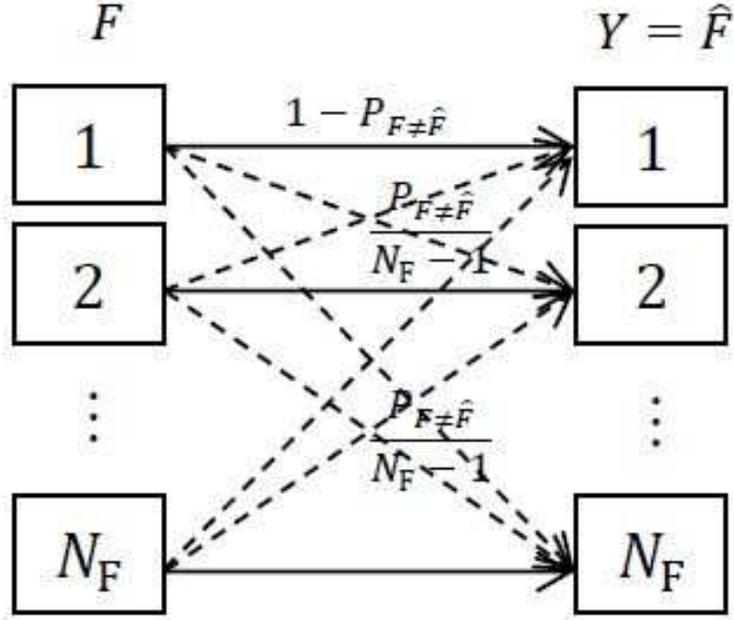} 
\centering\caption{Diagram of an $N_\mathrm{F}$-ary symmetric channel.}
\label{fig_Symmetric_channel}
\end{figure}
Hence, the channel between $F$ and $Y$ can be abstracted by a $N_\mathrm{F}$-ary symmetric channel as depicted in Fig. \ref{fig_Symmetric_channel}. The achievable rate $r_2(\rho)$ of a $N_\mathrm{F}$-ary symmetric channel depends on the subcarrier index detection error probability and can be presented as \cite{Weidmann12}
\begin{align}
r_2(\rho)=\log_2N_{\mathrm{F}}-H_b\left[P_{F\neq \hat{F}}(\rho) \right]-P_{F\neq \hat{F}}(\rho)\log_2(N_{\mathrm{F}}-1),
\label{equ_r2}
\end{align}
where $H_b[\mu]$ is the binary entropy function defined by $H_b[\mu]=-\mu\log_2\mu-(1-\mu)\log_2(1-\mu)$. In order to calculate $r_2(\rho)$, we need to compute $P_{F\neq \hat{F}}(\rho)$ with the following lemma.
%
\begin{lemma}{The subcarrier index detection error probability can be presented as} 
\label{Lemma1}
\begin{align}
P_{F\neq \hat{F}}(\rho)=1-\sqrt{\frac{\pi}{2}}\sum\limits_{k=0}^{N_{\mathrm{F}}-1}C^k_{N_{\mathrm{F}}-1}(-1)^{N_{\mathrm{F}}-k-1}\frac{\left[1-\Phi\left(\sqrt{\frac{N_{\mathrm{F}}-k}{2\rho (N_{\mathrm{F}}-k-1)}} \right) \right]}{\sqrt{\rho(N_{\mathrm{F}}-k-1)(N_{\mathrm{F}}-k)}}\exp\left( \frac{N_{\mathrm{F}}-k}{2\rho (N_{\mathrm{F}}-k-1)}\right),
\label{equ_P}
\end{align}
where $C^k_{N_{\mathrm{F}}-1}$ is the binomial coefficient defined by $C^k_{N_{\mathrm{F}}-1}=\frac{(N_{\mathrm{F}}-1)!}{k!(N_{\mathrm{F}}-k-1)}$ and $\Phi(z)$ is the error function defined by $\Phi(z)=\frac{2}{\sqrt{\pi}}\int\limits_0^z e^{-t^2}dt
$.
\end{lemma}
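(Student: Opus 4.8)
The plan is to model the subcarrier-index detector as a maximum-energy (equivalently, index-ML) receiver, so that the mapping $F\mapsto\hat F$ is exactly the $N_\mathrm{F}$-ary symmetric channel of Fig.~\ref{fig_Symmetric_channel}. By the i.i.d.\ assumption on the channel coefficients this channel is symmetric, so it suffices to fix the active subcarrier (say $F=1$) and compute the probability of correct detection $P_\mathrm{c}=1-P_{F\neq\hat F}$, i.e.\ the probability that the active-subcarrier energy exceeds the energies on all $N_\mathrm{F}-1$ silent subcarriers. First I would record the two ingredient laws. On a silent subcarrier the received sample is pure complex noise, so its energy is exponential; after normalising by $\sigma^2_\mathrm{N}$ these $N_\mathrm{F}-1$ energies are i.i.d.\ with complementary CDF $e^{-u}$. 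On the active subcarrier, conditioning on the channel power gain $z=|H|^2$ and taking the Gaussian symbol to load a single signalling dimension (so that, after derotation by the known channel phase, the useful signal lies on one quadrature with power $\rho z$ while complex noise fills both), the energy $U$ is a sum of two independent Gaussian squares, one of variance $\tfrac12$ (noise only) and one of variance $\tfrac12+\rho z$ (signal plus noise). Its conditional Laplace transform then factorises as $\mathrm{E}\!\left[e^{-sU}\mid z\right]=\left[(1+s)\,(1+s(1+2\rho z))\right]^{-1/2}$.

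Next I would write $P_\mathrm{c}=\mathrm{E}_{z}\,\mathrm{E}_{U\mid z}\!\left[(1-e^{-U})^{N_\mathrm{F}-1}\right]$, expand the bracket by the binomial theorem, and integrate term by term. Since each factor is $e^{-kU}$, the inner expectation collapses onto the Laplace transform above, giving $\mathrm{E}_{U\mid z}\!\left[(1-e^{-U})^{N_\mathrm{F}-1}\right]=\sum_{k=0}^{N_\mathrm{F}-1}C^k_{N_\mathrm{F}-1}(-1)^k\left[(1+k)(1+k(1+2\rho z))\right]^{-1/2}$. It then remains to average each summand over the Rayleigh fading, i.e.\ over $z$ with density $e^{-z}$. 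Pulling out $\sqrt{1+k}$, every term reduces to an integral of the form $\int_0^\infty e^{-z}(C_0+C_1 z)^{-1/2}\,dz$ with $C_0=1+k$ and $C_1=2k\rho$; the substitution $y=\sqrt{C_0+C_1 z}$ turns this into the Gaussian-tail integral $\tfrac{2}{\sqrt{C_1}}e^{C_0/C_1}\int_{\sqrt{C_0/C_1}}^\infty e^{-t^2}dt$, which evaluates to $\sqrt{\pi/C_1}\,e^{C_0/C_1}\,[1-\Phi(\sqrt{C_0/C_1})]$. Substituting $C_0,C_1$ reproduces exactly the scaled complementary error function $[1-\Phi(\cdot)]\exp(\cdot)$ with argument $\sqrt{(k+1)/(2k\rho)}$ and the prefactor $1/\sqrt{\rho k(k+1)}$ appearing in \eqref{equ_P}.

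Finally I would collect the terms: after the reindexing $m=N_\mathrm{F}-k$, so that $C^k_{N_\mathrm{F}-1}=C^{N_\mathrm{F}-m}_{N_\mathrm{F}-1}$ and $(-1)^k=(-1)^{N_\mathrm{F}-m-1}$, the series is precisely $\sqrt{\pi/2}$ times the sum in \eqref{equ_P}, and the identity $P_{F\neq\hat F}=1-P_\mathrm{c}$ yields the claimed expression. The step I expect to be most delicate is establishing the conditional law of the active-subcarrier energy in the product-of-square-roots form $\left[(1+s)(1+s(1+2\rho z))\right]^{-1/2}$: this is where the one-dimensional (single-quadrature) nature of the useful signal is essential, since a model loading the Gaussian symbol onto both quadratures would instead make $U$ exponential and return an exponential-integral answer rather than an error-function one. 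A secondary technical point is the $k=N_\mathrm{F}-1$ (equivalently $m=1$) term, which is a $0/0$ indeterminate in \eqref{equ_P}: it must be taken as a limit, using $[1-\Phi(x)]\exp(x^2)\sim 1/(x\sqrt{\pi})$ as $x\to\infty$, whereupon it contributes the leading constant $1$; together with the identity $\sum_{j}C^{j}_{N_\mathrm{F}-1}(-1)^{j}/(j+1)=1/N_\mathrm{F}$ this gives the sanity checks $P_{F\neq\hat F}\to 0$ as $\rho\to\infty$ and $P_{F\neq\hat F}\to 1-1/N_\mathrm{F}$ as $\rho\to 0$.
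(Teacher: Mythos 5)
Your proof is correct and reaches (\ref{equ_P}) exactly, but it runs the average in the opposite order of conditioning from the paper, which changes the character of every intermediate step. The paper conditions on the symbol $X=x$ first: then $Y_1=Hx+N$ is circularly symmetric complex Gaussian, the active-subcarrier energy is exponential with mean $x^2+\sigma^2_\mathrm{N}$, the binomial expansion yields rational terms $\left[x^2\rho(N_\mathrm{F}-k-1)+N_\mathrm{F}-k\right]^{-1}$, and the error function appears only at the very end, when these are integrated against the $\chi^2_1$ density $p_{X^2}(z)=\frac{1}{\sqrt{2\pi z}}e^{-z/2}$ via a table integral \cite{TableIntegral}. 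You instead condition on $z=|H|^2$ and average over the (real) Gaussian symbol, so the active energy becomes a sum of two unequal-variance Gaussian squares with the product-form Laplace transform $\left[(1+s)\left(1+s(1+2\rho z)\right)\right]^{-1/2}$, and the error function emerges from an elementary substitution in the exponentially weighted integral $\int_0^\infty e^{-z}(C_0+C_1 z)^{-1/2}dz$. The two computations are Fubini-equivalent evaluations of the same expectation over $(H,X,N)$, and the resulting terms agree one-for-one. What your route buys: the table lookup is replaced by an elementary Gaussian-tail substitution, and it makes explicit the assumption the paper leaves implicit --- that the Gaussian input occupies a single signalling dimension, which is exactly what the paper encodes by taking $p_{X^2}$ to be $\chi^2_1$ rather than exponential; your remark that a two-quadrature input would yield exponential-integral rather than error-function terms is the correct diagnosis of why this matters. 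What the paper's route buys: the conditional energy law stays exponential, so no Laplace-transform machinery is needed. Your treatment of the indeterminate $k'=N_\mathrm{F}-1$ term as a limit equal to $1$, and the two SNR sanity checks, are worthwhile additions the paper omits. One cosmetic slip: the reindexing should be $k\mapsto N_\mathrm{F}-1-k$ combined with the symmetry $C^{k}_{N_\mathrm{F}-1}=C^{N_\mathrm{F}-1-k}_{N_\mathrm{F}-1}$, not $m=N_\mathrm{F}-k$; as stated your sign claim $(-1)^k=(-1)^{N_\mathrm{F}-m-1}$ does not hold, although the two sums do coincide term by term under the correct bijection, so no conclusion is affected.
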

\begin{proof}
Conditioning on $X$, errors occur in the detection of the subcarrier index when the received signal power on the intended subcarrier is less then any one of the rest subcarriers. Due to the symmetry, it is sufficient to calculate the subcarrier index detection error probability assuming that the first subcarrier is used. Let $Y_k$ ($1\leqslant k \leqslant N_\mathrm{F}$) be the received signal on the $k$th subcarrier. Then, $Y_1=HX+N$ and $Y_2,Y_3,...,Y_{N_\mathrm{F}}$ have the same distribution as $N$. With the condition $X=x$, $|Y_1|^2$ is an exponential random variable with mean $x^2+\sigma_{\mathrm{N}}^2$ and $|Y_{k\neq 1}|^2$ are exponential random variables with mean $\sigma_{\mathrm{N}}^2$. Let $G_{|Y_2|^2}$ be the CDF of $|Y_2|^2$. Then, $P_{F\neq \hat{F}|X=x}(\rho)$ is calculated as 
\begin{align}
P_{F\neq \hat{F}|X=x}(\rho)&=
\Pr\left\lbrace |Y_1|^2<\max\left\lbrace |Y_2|^2,\cdots,|Y_{N_{\mathrm{F}}}|^2 \right\rbrace|X=x \right\rbrace\nonumber\\
&=1-\int\limits_0^\infty p_{|Y_1|^2}(z)G_{|Y_2|^2}^{N_{\mathrm{F}}-1}(z)dz\nonumber\\
&=1-\int\limits_0^\infty\frac{\exp\left\lbrace-\frac{z}{x^2+\sigma_{\mathrm{N}}^2}\right\rbrace}{x^2+\sigma^2_{\mathrm{N}}} \sum\limits_{k=0}^{N_{\mathrm{F}}-1}\begin{pmatrix}
N_{\mathrm{F}}-1
\\ 
k
\end{pmatrix}(-1)^{N_{\mathrm{F}}-k-1} e^{-\frac{1}{\sigma^2_{\mathrm{N}}}(N_{\mathrm{F}}-k-1)z} dz\nonumber\\
&=1-\sum\limits_{k=0}^{N_{\mathrm{F}}-1}\begin{pmatrix}
N_{\mathrm{F}}-1
\\ 
k
\end{pmatrix}(-1)^{N_{\mathrm{F}}-k-1}\int\limits_0^\infty\frac{\exp\left\lbrace-\frac{z}{x^2+\sigma_{\mathrm{N}}^2}-\frac{z}{\sigma^2_{\mathrm{N}}}(N_{\mathrm{F}}-k-1)\right\rbrace}{x^2+\sigma^2_{\mathrm{N}}}   dz\nonumber\\
&=1-\sum\limits_{k=0}^{N_{\mathrm{F}}-1}\begin{pmatrix}
N_{\mathrm{F}}-1
\\ 
k
\end{pmatrix}(-1)^{N_{\mathrm{F}}-k-1}\frac{1}{\frac{x^2}{\sigma^2_{\mathrm{N}}}(N_{\mathrm{F}}-k-1)+N_{\mathrm{F}}-k}\nonumber\\
&=1-\sum\limits_{k=0}^{N_{\mathrm{F}}-1}C^k_{N_{\mathrm{F}}-1}(-1)^{N_{\mathrm{F}}-k-1}\frac{1}{x^2\rho(N_{\mathrm{F}}-k-1)+N_{\mathrm{F}}-k}.
\end{align}
Since $x$ follows a Gaussian distribution with zero mean and variance one, the PDF of $x^2$ is the Gamma distribution with degree of freedom one, i.e., $p_{X^2}(z)=\frac{1}{\sqrt{2\pi z}}e^{-\frac{z}{2}}.$
Unconditioning the frequency index detection error probability, $P_{F\neq \hat{F}}(\rho)$ can be obtained as
\begin{align}
P_{F\neq \hat{F}}(\rho)&=\int\limits_{-\infty}^\infty P_{F\neq \hat{F}|X=x}(\rho)p_X(x)dx \nonumber\\
&=\int\limits_{0}^\infty P_{F\neq \hat{F}|X=z}(\rho)p_{X^2}(z)dz \nonumber\\
&=1-\sum\limits_{k=0}^{N_{\mathrm{F}}-1}\int\limits_0^{\infty}\frac{C^k_{N_{\mathrm{F}}-1}(-1)^{N_{\mathrm{F}}-k-1}}{z\rho(N_{\mathrm{F}}-k-1)+N_{\mathrm{F}}-k}\frac{1}{\sqrt{2\pi z}}e^{-z/2}dz.
\end{align}
Solving the integral \cite{TableIntegral}, (\ref{equ_P}) can be obtained.
\end{proof}

\section{Multi-Cell OFDM-IM} \label{sec_Multi_Cell_IM}
After deriving the single cell sum rate of OFDM-IM with Gaussian inputs in Section \ref{sec_Single_Link_IM}, multi-cell OFDM-IM is investigated in this section. In practical system, finite alphabet inputs are usually used instead of Gaussian inputs. Moreover, the impact of ICI needs to be studied. Therefore, OFDM-IM with QAM inputs is assumed in this section and other types of constellations can be obtained in a similar manner. The CDF of SINR, the PDF of noise plus ICI, and the multi-cell sum rate will be derived.

\subsection{CDF of SINR}
Since only one subcarrier of the target UE and the associated BS is active, the density of interfering BSs to the target UE is one $N_\mathrm{F}$th of the original BS density. Let $\tilde{\rho}$ denote the SINR and $P_T$ denote the transmit power of each BS.
The derivation of the CDF of SINR is directly generalized from \cite{Baccelli06}. The normalized interference power $\mathcal{I}$ from other BSs transmitting on the same subcarrier to the target user can be computed as
\begin{align}
\mathcal{I}=\sum\limits_{l\in \mathcal{S}'}|h_l|^2|d_l|^{-\alpha}.
\end{align}
The channel $h_\xi$ between the target user and his associated BS follows Rayleigh distribution. Hence, $|h_\xi|^2$ is an exponentially distributed random variable. The CDF $G_{\tilde{\rho}}(\tilde{\rho})$ of SINR can then be computed as \cite{Baccelli06}
\begin{align}
G_\mathrm{\tilde{\rho}}(\tilde{\rho})&=1-\Pr\left\lbrace \frac{P_T|h_\xi|^2d^{-\alpha}}{\sigma^2_N+P_T\mathcal{I}}\geqslant \tilde{\rho} \right\rbrace\nonumber\\
&=1-\exp\left(-P_T^{-1}d^\alpha\tilde{\rho}\sigma^2_N \right)\mathrm{E}\left[\exp\left(-d^\alpha \tilde{\rho} \mathcal{I} \right)\right].
\label{equ_G_rho}
\end{align}

Using the Laplace transform of the exponential function and the probability generating functional \cite{Baccelli06}, the expectation part in (\ref{equ_G_rho}) can be expressed as
\begin{align}
\mathrm{E}\left[\exp\left(-d^\alpha \tilde{\rho} \mathcal{I} \right)\right]&=\mathrm{E}\left[\exp\left(-z \mathcal{I} \right)\right]_{|z=d^\alpha \tilde{\rho}}\nonumber\\
&=\exp\left(-\frac{\lambda}{N_{\mathrm{F}}}\int_{\mathfrak{R}^2}\frac{1}{1+z^{-1}|y|^\alpha}dx \right)_{|z=d^\alpha \rho}\nonumber\\
&=\exp\left(-\frac{\lambda}{N_{\mathrm{F}}}z^{2/\alpha}\frac{2\pi^2}{\alpha\sin(2\pi/\alpha)} \right)_{|z=d^\alpha \tilde{\rho}}.
\end{align}
Hence, the CDF $G_{\tilde{\rho}}(\tilde{\rho})$ of multi-cell OFDM-IM can be expressed as
\begin{align}
G_\mathrm{\tilde{\rho}}(\tilde{\rho})=1-\exp\left(-P_T^{-1}d^\alpha\tilde{\rho}\sigma^2_N \right)\exp\left(-\frac{\lambda}{N_{\mathrm{F}}}d^2\tilde{\rho}^{\frac{2}{\alpha}}\frac{2\pi^2}{\alpha\sin(2\pi/\alpha)} \right).
\label{equ_G_CDF}
\end{align}
To avoid $0$ at the denominator in (\ref{equ_G_CDF}), $\alpha$ should be larger than $2$. Typical values of $\alpha$ are between $2$ to $4$.
\subsection{PDF of Noise plus Interference}
The exact PDF of noise plus ICI of multi-cell OFDM-IM with QAM inputs remained unanswered in the literature. Generalized Gaussian distributions were used to approximate the PDF in \cite{Hong14} and \cite{Seol09}. In this section, the exact PDF of noise plus ICI will be derived, showing that noise plus ICI is MoG distributed.
Let $Q=2^{N_B-1}$ and denote $\psi$ as the noise plus ICI, i.e., $\psi= N+\sum\limits_{l\neq \xi}\sqrt{T_l}H_lX_l\zeta_l$.

\begin{theorem} \label{theorem1}
The PDF $p_{\psi}$ of noise plus ICI in multi-cell OFDM-IM with QAM inputs follows a MoG distribution, i.e., 
\begin{align}
p_{\psi}(z)=\sum\limits_{k=1}^{Q}\omega_k\mathcal{N}(z;0,\sigma_k^2).
\end{align}

\label{the_PDF_NI}
\end{theorem}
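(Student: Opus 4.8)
The plan is to condition on which interfering base stations happen to transmit on the target subcarrier, show that given such an activation pattern $\psi$ is a zero-mean Gaussian, and then marginalize over the finitely many patterns to obtain the claimed finite mixture. Concretely, I would first collect the activation indicators into the vector $\boldsymbol{\zeta}=(\zeta_l)_{l\neq\xi}$. Since there are $N_B-1$ potential interferers and each $\zeta_l\in\{0,1\}$, there are exactly $2^{N_B-1}=Q$ distinct patterns; index them $k=1,\dots,Q$ and let $\mathcal{A}_k$ denote the set of active interferers in pattern $k$. Because each BS activates the target subcarrier independently with probability $\Pr\{\zeta_l=1\}=1/N_\mathrm{F}$, the probability of pattern $k$ is $\omega_k=\prod_{l\in\mathcal{A}_k}\frac{1}{N_\mathrm{F}}\prod_{l\notin\mathcal{A}_k}\bigl(1-\frac{1}{N_\mathrm{F}}\bigr)$, and these quantities are nonnegative and sum to one, so they are legitimate mixture weights.

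The engine of the argument is the conditional Gaussianity of each interference term. I would use the fact that $H_l$ is a circularly symmetric complex Gaussian with unit variance, so for any fixed symbol value the product $\sqrt{T_l}H_l X_l$ is, by rotational invariance, a zero-mean complex Gaussian with variance equal to $T_l$ times the symbol energy. Averaging the QAM symbol over the constellation (equivalently, using its average energy) lets each active interferer be represented by a single zero-mean Gaussian contribution. Since the channels $H_l$, the symbols $X_l$, and the noise $N$ are mutually independent, conditioning on pattern $k$ makes $\psi$ a sum of independent zero-mean Gaussians, hence itself zero-mean Gaussian with variance $\sigma_k^2=\sigma_\mathrm{N}^2+\sum_{l\in\mathcal{A}_k}T_l\,\mathrm{E}[|X_l|^2]$, so that $p_{\psi\mid k}(z)=\mathcal{N}(z;0,\sigma_k^2)$. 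Applying the law of total probability over the $Q$ patterns then yields $p_\psi(z)=\sum_{k=1}^{Q}\omega_k\mathcal{N}(z;0,\sigma_k^2)$, which is exactly the asserted MoG form; the empty pattern $\mathcal{A}_k=\varnothing$ simply contributes the pure-noise component $\mathcal{N}(z;0,\sigma_\mathrm{N}^2)$.

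The step I expect to be the main obstacle is the treatment of the QAM symbol inside each active term. Conditioning on a specific constellation point gives a Gaussian whose variance depends on $|X_l|^2$, so a fully exact accounting would split each active interferer further by symbol magnitude and inflate the number of components well beyond $2^{N_B-1}$. Obtaining precisely $Q=2^{N_B-1}$ components therefore hinges on collapsing this symbol-level dependence — either by invoking a constant-modulus normalization or by replacing $|X_l|^2$ with the average symbol energy $\mathrm{E}[|X_l|^2]$ — and I would make that modeling choice explicit and check that it is consistent with the variance expression for $\sigma_k^2$. A secondary point to handle carefully is whether $\mathcal{N}(\cdot)$ denotes the complex Gaussian density or its real and imaginary marginals; since $\psi$ is complex, I would either state the result for the circularly symmetric complex Gaussian directly or argue the two marginals separately, each inheriting the same mixture structure with identical weights $\omega_k$.
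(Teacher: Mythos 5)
Your proposal is correct and lands on the same underlying decomposition as the paper, but it gets there by a genuinely different mechanism. The paper never conditions explicitly: it computes the PDF of a single interference term $h_lX_l\zeta_l$ via the product-distribution integral, obtaining the two-component mixture $\frac{1}{N_\mathrm{F}}\frac{1}{\pi}\exp(-|z|^2)+\frac{N_\mathrm{F}-1}{N_\mathrm{F}}\delta(z)$, and then convolves these $N_\mathrm{B}-1$ mixtures with one another and finally with the noise PDF; expanding that convolution product is what implicitly enumerates the $Q=2^{N_\mathrm{B}-1}$ subsets of active interferers. Your route --- conditioning on the vector of activation indicators $(\zeta_l)_{l\neq\xi}$, noting conditional Gaussianity given a pattern, and applying the law of total probability --- is the probabilistic dual of that convolution expansion. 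What your version buys: it avoids Dirac-delta calculus entirely (the paper's least rigorous step is representing $\delta(z)$ as $\lim_{\epsilon\rightarrow 0}\frac{1}{\pi\epsilon}\exp(-|z|^2/\epsilon)$ inside the product-distribution integral), and it produces the weights $\omega_k$ and variances $\sigma_k^2$ in closed form, which the paper never writes down --- it instead estimates them numerically with an EM algorithm. What the paper's version buys is that it is purely mechanical once the single-term PDF is in hand.

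On the obstacle you flagged --- the dependence of the conditional variance on $|X_l|^2$ --- your instinct is exactly right, and it is worth seeing how the paper evades it: its proof is carried out for 4QAM ``for simplicity,'' and 4QAM is constant-modulus, so $h_lX_l$ is \emph{exactly} $\mathcal{CN}(0,1)$ by rotational invariance with $|X_l|$ deterministic; no averaging over symbol energy is needed and the conditional Gaussianity is exact. Be careful with your proposed fallback of replacing $|X_l|^2$ by $\mathrm{E}\left[|X_l|^2\right]$: that is an approximation, not a proof. The exact statement for non-constant-modulus constellations (16QAM and beyond) is still a MoG, but with components indexed jointly by the activation pattern and by the magnitude class of each active symbol, hence strictly more than $2^{N_\mathrm{B}-1}$ of them. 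So state the constant-modulus assumption explicitly, as you proposed, and your argument is complete --- and in fact tighter in its bookkeeping than the paper's. Your secondary observation about real versus complex densities is also well taken: the theorem statement writes $\mathcal{N}$ while the paper's own proof works with $\mathcal{CN}$ throughout; circular symmetry ensures the real and imaginary parts each inherit the same mixture structure with identical weights, which is what the paper's Fig.~\ref{fig_FQAM_Interf_PDF} implicitly relies on when plotting the real part.
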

\begin{proof}
In this proof, $4$QAM is used for simplicity. The distribution of one ICI term , i.e. $\psi_l= h_lX_l\zeta_l$, is first computed. It can be easily shown that $ h_lX_l \sim \mathcal{CN}(0,1)$. Also, the PDF of $\zeta_l$ for 4QAM can be expressed as $p_{\zeta_l}(\zeta_l)=\frac{1}{N_\mathrm{F}}\delta(\zeta_l-1)+\frac{N_\mathrm{F}-1}{N_\mathrm{F}}\delta(\zeta_l)$
which is directly obtained from the fact that the subcarrier index is chosen uniformly. Next, according to the product distribution,
\begin{align}
p_{ h_lX_l \zeta_l}(z)&=\int\limits_{-\infty}^{\infty}p_{\zeta_l}(\tau)\frac{1}{{\pi}}\exp(-|z/\tau|^2)\frac{1}{|\tau|}d\tau\nonumber\\
&=\frac{1}{N_\mathrm{F}}{\frac{1}{{{\pi}}}}\exp(-|z|^2)+\frac{N_\mathrm{F}-1}{N_\mathrm{F}}\lim\limits_{\epsilon\rightarrow 0}\frac{1}{{\pi}\epsilon}\exp\left(-\frac{|z|^2}{\epsilon} \right)\nonumber\\
&=\frac{1}{N_\mathrm{F}}{\frac{1}{{{\pi}}}}\exp(-|z|^2)+\frac{N_\mathrm{F}-1}{N_\mathrm{F}}\delta(z).
\end{align}
Hence, the PDF of one interference term is a weighted sum of a Gaussian function and a Dirac delta function. Using the properties of convolution, the PDF of the total interference can be expressed as
\begin{align}
p_{ h_1X_1 \zeta_1}(z)&* \cdots *p_{ h_{ \xi-1}X_{\xi-1} \zeta_{\xi-1}}(z)*p_{ h_{\xi+1}X_{\xi+1} \zeta_{\xi+1}}(z)*\cdots *p_{ h_{N_\mathrm{B}}X_{N_\mathrm{B}} \zeta_{N_\mathrm{B}}}(z)\nonumber\\
&=\sum\limits_{k=1}^{Q-1}\tilde{\omega}_k\mathcal{CN}(z;0,\tilde{\sigma}_k^2)+\tilde{\omega}_{Q}\delta(z),
\end{align}
where $*$ denotes the convolution operator and $\tilde{\omega}_k\geqslant 0$ with $\sum\limits_{k}\tilde{\omega}_k=1$. By adding the noise term, the PDF $p_\psi$ of noise plus ICI can be calculated as
\begin{align}
p_\psi(z)&=\mathcal{CN}(z;0,\sigma_\mathrm{N}^2)*\left[\sum\limits_{k=1}^{Q-1}\tilde{\omega}_k\mathcal{CN}(z;0,\tilde{\sigma}_k^2)+\tilde{\omega}_Q\delta(z)\right]\nonumber\\
&=\sum\limits_{k=1}^{2^{N_B-1}}{\omega}_k\mathcal{CN}(z;0,\sigma_k^2),
\label{equ_p_psi}
\end{align}
where ${\omega}_k\geqslant 0$ with $\sum\limits_{k}{\omega}_k=1$.
\end{proof}
\begin{lemma}
The PDF $p_{Y}$ of the received signal $Y$ in multi-cell OFDM-IM with QAM inputs follows a MoG distribution, i.e.,
\begin{align}
p_{Y}(y)=\sum\limits_{k=1}^{Q}\omega^{\prime}_k\mathcal{CN}(y;0,\sigma_k^{\prime2}).
\end{align}
 
\end{lemma}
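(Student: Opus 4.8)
The plan is to build directly on Theorem~\ref{the_PDF_NI} by writing the received signal as a sum of an independent desired-signal term and the noise-plus-ICI term $\psi$, and then convolving their densities. From (\ref{equ_rx_signal}) we have $Y=S+\psi$, where $S=\sqrt{T_\xi}H_\xi X_\xi$ is the contribution of the associated BS and $\psi=N+I$ is exactly the quantity whose law was characterised in Theorem~\ref{the_PDF_NI}. Since the desired symbol $X_\xi$, the associated channel $H_\xi$, the additive noise $N$, and the interfering BSs' channels and symbols are all mutually independent, $S$ is independent of $\psi$, so $p_Y=p_S*p_\psi$, where $*$ is the (two-dimensional) convolution already used in the proof of Theorem~\ref{the_PDF_NI}.

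The first step is to identify the law of $S$. Exactly as in that proof, where it was noted that $h_lX_l\sim\mathcal{CN}(0,1)$, the product $H_\xi X_\xi$ is circularly symmetric complex Gaussian: $H_\xi\sim\mathcal{CN}(0,1)$ and, for $4$QAM, every constellation point shares the same modulus, so conditioning on the symbol does not change the variance. Hence $S=\sqrt{T_\xi}H_\xi X_\xi\sim\mathcal{CN}(0,\sigma_S^2)$ with a single variance $\sigma_S^2=T_\xi\,\mathrm{E}\!\left[|X_\xi|^2\right]$.

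The second step substitutes the MoG form of $p_\psi$ from Theorem~\ref{the_PDF_NI} and uses the linearity of convolution together with the closure of the complex Gaussian family under convolution, $\mathcal{CN}(z;0,a)*\mathcal{CN}(z;0,b)=\mathcal{CN}(z;0,a+b)$. This yields
\begin{align}
p_Y(y)&=\mathcal{CN}(y;0,\sigma_S^2)*\sum_{k=1}^{Q}\omega_k\mathcal{CN}(y;0,\sigma_k^2)\nonumber\\
&=\sum_{k=1}^{Q}\omega_k\,\mathcal{CN}\!\left(y;0,\sigma_S^2+\sigma_k^2\right),
\end{align}
so the claim holds with $\omega_k^{\prime}=\omega_k$ and $\sigma_k^{\prime 2}=\sigma_S^2+\sigma_k^2$. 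In particular the number of mixture components is preserved at $Q=2^{N_B-1}$, and the weights remain nonnegative and sum to one.

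The only delicate point is the Gaussianity of $S$, which for $4$QAM is immediate from the constant-modulus property of the constellation. For higher-order QAM this step fails directly: conditioning on $X_\xi=x$ gives $S\mid x\sim\mathcal{CN}(0,T_\xi|x|^2)$, so $S$ is itself a magnitude-indexed MoG, and one must then convolve two MoGs. The same Gaussian-closure argument still returns a MoG, but at the cost of a larger component count, so I expect the main effort in the general case to be the bookkeeping of the resulting weights and variances rather than any new idea; under the $4$QAM assumption adopted here the statement follows at once from Theorem~\ref{the_PDF_NI}.
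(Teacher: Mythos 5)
Your proof is correct and follows essentially the same route as the paper, whose own argument is a one-line appeal to Theorem~\ref{theorem1} together with (\ref{equ_rx_signal}) and the Gaussianity of the desired signal $\sqrt{T_\xi}H_\xi X_\xi$; you have simply made explicit the independence, the convolution $p_Y=p_S*p_\psi$, and the closure of complex Gaussians under convolution that the paper leaves implicit. Your closing caveat about higher-order QAM (where $S$ conditioned on the symbol magnitude is itself a mixture, inflating the component count) is a fair observation, but it applies equally to the paper's proof of Theorem~\ref{theorem1}, which also restricts to $4$QAM "for simplicity."
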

\begin{proof}
This can be obtained directly from Theorem \ref{theorem1} and (\ref{equ_rx_signal}), because the desired signal also follows a Gaussian distribution.
\end{proof}

Next, parameters $\mathbf{\Omega}=\left[\omega_1 \quad \omega_2 \cdots \omega_{Q}\right]$ and $\mathbf{v}=\left[\sigma^2_1 \quad \sigma^2_2 \cdots \sigma^2_{Q}\right]$ need to be estimated. It is well known that the EM algorithm has been widely used to estimate parameters of MoG distributions. Detailed derivations of EM algorithm for MoG parameter estimation are beyond the scope of this paper. Interested readers can find details in \cite{Bishop_PRML}. In particular, for FQAM, the traditional EM algorithm \cite{Bishop_PRML} can be further simplified because the means of the channel, the transmitted symbol, the ICI, and noise are all zero. The simplified EM algorithm is presented in Fig. \ref{fig_algorithm}. Assuming samples of the noise plus ICI are measured as $\left\lbrace \tau_a \right\rbrace_{a=1}^{N_s}$. First, $\mathbf{\Omega}$ and $\mathbf{v}$ are randomly chosen as initialization. Second, E step and M step are operated iteratively until a certain convergence condition is met. The number of Gaussian functions in (\ref{equ_p_psi}) grows exponentially with the number of BSs, which is impractical due to high complexity. However, only a small number of BSs are dominating the total interference. In this case, a small number $Q'$ ($Q'<<Q$) of Gaussian functions will be sufficient to approximate the distribution of noise plus ICI. The parameters of the PDF $p_{Y}$ of the received signal, i.e., $\mathbf{\Omega}^{\prime}=\left[\omega^{\prime}_1 \quad \omega^{\prime}_2 \cdots \omega^{\prime}_{Q}\right]$ and $\mathbf{v}=\left[\sigma^{\prime2}_1 \quad \sigma^{\prime2}_2 \cdots \sigma^{\prime2}_{Q}\right]$, can also be estimated via the same procedure.

\begin{figure}
\hrulefill
\begin{algorithmic}[1]
\State Initialize $\mathbf{\Omega}$ and $\mathbf{v}$ randomly
\State E step: Compute 
\begin{align}
\eta_{ak}=\frac{\omega_k\mathcal{CN}(\tau_a;0,\sigma_k^2)}{\sum\limits_{k=1}^Q\omega_k\mathcal{CN}(\tau_a;0,\sigma_k^2)}
\end{align}
\State M step: Update $\sigma_k^{2,\mathrm{new}}$ and $\omega_k^{\mathrm{new}}$   according to 
\begin{align}
\sigma_k^{2,\mathrm{new}}=\frac{1}{\sum\limits_{a=1}^{N_s}\eta_{ak}}\sum\limits_{a=1}^{N_s}\eta_{ak}|\tau_a|^2
\end{align}
\begin{align}
\omega_k^{\mathrm{new}}=\frac{\sum\limits_{a=1}^{N_s}\eta_{ak}}{N_s}. 
\end{align}
\State Repeat E step and M step until convergence condition is met.

\end{algorithmic}
\hrulefill
\caption{EM algorithm for parameter estimation of noise plus ICI in OFDM-IM systems.}
\label{fig_algorithm}
\end{figure}

\subsection{Multi-Cell Sum Rate}
The multi-cell sum rate $r_3$ of OFDM-IM is defined by the mutual information between the received signal and the information source as
\begin{align}
r_3=\mathrm{E}\left[\max I(X,F;Y)\right]=\mathrm{E}\left[ H(Y)\right]-\mathrm{E}\left[ H(Y|X,F)\right]=H(Y)-H(Y|X,F),
\label{equ_multicell_sumrate}
\end{align}
where $H()$ is the entropy function. The maximum operator in (\ref{equ_multicell_sumrate}) is removed because of the QAM symbol and the expectation operator is removed because the entropies of $Y$ and $N+I$ are already expectation values. Although closed-form integral of (\ref{equ_multicell_sumrate}) is not feasible because of the summation inside the logarithm function, upper bounds of the sum rate can be evaluated.

Since the entropy function $H()$ is concave, using Jensen's inequality, $H(Y|X,F)$ is lower bounded by

\begin{align}
H(Y|X,F)=H\left(\sum\limits_{k=1}^{Q}\omega_k\mathcal{CN}(z;0,\sigma_k^{2}) \right)&\geqslant \sum\limits_{k=1}^{Q}\omega_k H\left(\mathcal{CN}(z;0,\sigma_k^{2}) \right)\nonumber\\
&=\frac{1}{2}+\sum\limits_{k=1}^{Q}\omega_k\log_2 (2\pi e\sigma_k).
\label{equ_HYFX_LB}
\end{align}


On the contrary, $H(Y)$ is upper bounded by \cite{Huber08}
\begin{align}
H(Y)=H\left(\sum\limits_{k=1}^{Q}\omega^{\prime}_k\mathcal{CN}(y;0,\sigma_k^{\prime2}) \right) &\leqslant -\int \sum\limits_{k=1}^{Q}\omega^{\prime}_k\mathcal{CN}(y;0,\sigma_k^{\prime2})\log_2\left(\omega^{\prime}_k\mathcal{CN}(y;0,\sigma_k^{\prime2}) \right)dy\nonumber\\
&=\frac{1}{2}+\sum\limits_{k=1}^{Q}\omega^{\prime}_k\log_2 (2\pi e\sigma_k^{\prime}/\omega^{\prime}_k).
\label{equ_HY_UB}
\end{align}
To sum up, the multi-cell sum rate of OFDM-IM $r_3$ is upper bounded by
\begin{align}
 r_3\leqslant \sum\limits_{k^{\prime}=1}^{Q}\omega^{\prime}_{k^{\prime}}\log_2 (2\pi e\sigma_{k^{\prime}}^{\prime}/\omega^{\prime}_{k^{\prime}})-\sum\limits_{k=1}^{Q}\omega_k\log_2 (2\pi e\sigma_k).
\label{equ_r3_bound}
\end{align}
This upper bound provides insights in sum rate performance of multi-cell OFDM-IM.

\section{Results and Analysis} \label{sec_results_analysis}
Subcarrier index detection error probability and achievable rates of single cell OFDM-IM with different number of subcarriers are depicted in Fig. \ref{fig_Sidx_error_prob} and Fig. \ref{fig_Index_Modulation_sumrate}, respectively. In both figures, analytic results match simulated results well. It can be observed in Fig. \ref{fig_Index_Modulation_sumrate} that when the SNR is relatively low, the achievable rate contributed by subcarrier indexes is not significant. However, when SNR increases gradually, the gaps between achievable rates with different numbers of subcarriers first increase and then become stable.
\begin{figure} 
\centering\includegraphics[width=5in]{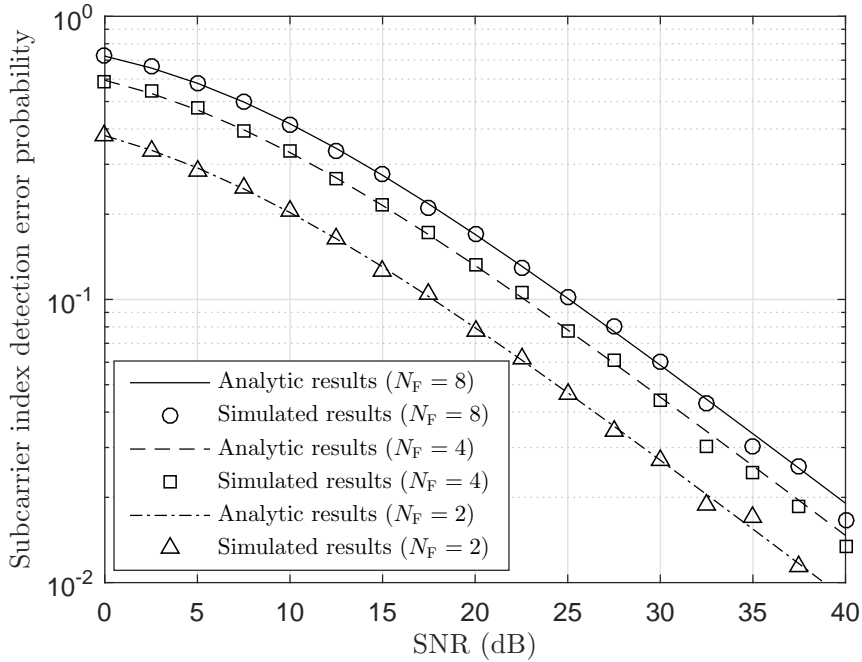} 
\centering\caption{Subcarrier index detection error probability of single cell OFDM-IM.}
\label{fig_Sidx_error_prob}
\end{figure}
\begin{figure} 
\centering\includegraphics[width=5in]{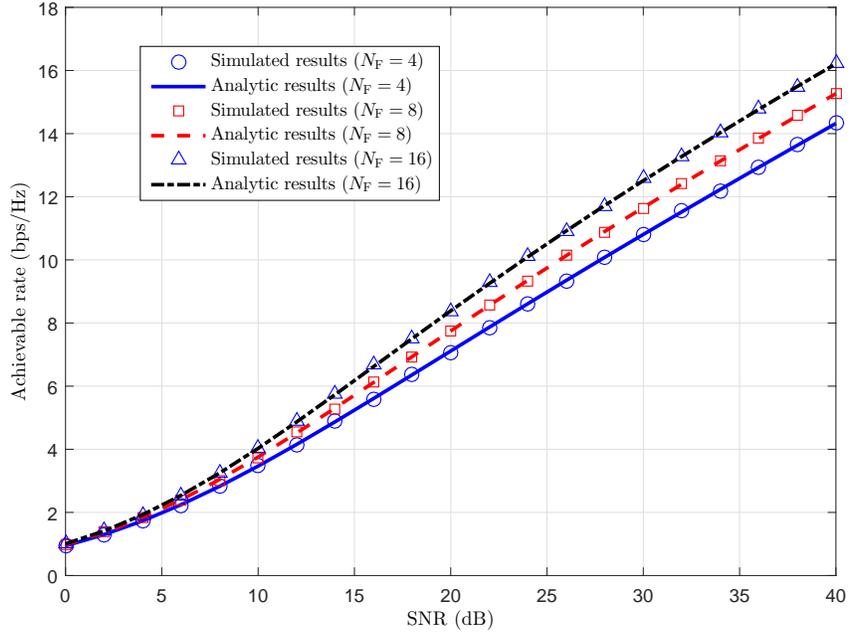} 
\centering\caption{Achievable rates of single cell OFDM-IM with different number of subcarriers.}
\label{fig_Index_Modulation_sumrate}
\end{figure}

Fig. \ref{fig_r2_Nfft} shows single cell OFDM-IM achievable rate contributed by subcarrier indexes with different numbers of subcarriers and different SNR values. From (\ref{equ_r2}), $r_2$ tends to zero when $N_\mathrm{F}$ tends to infinity. This suggests that the benefit of increasing $N_\mathrm{F}$ is diminishing and there is an optimal number of $N_\mathrm{F}$ such that $r_2$ reaches its peak. Closed-from expression of the optimal value of $N_\mathrm{F}$ is not available. However, this can be calculated by numerical results. It is shown in Fig. \ref{fig_r2_Nfft} that the optimal value of $N_\mathrm{F}$ increases with the SNR.
\begin{figure} 
\centering\includegraphics[width=5in]{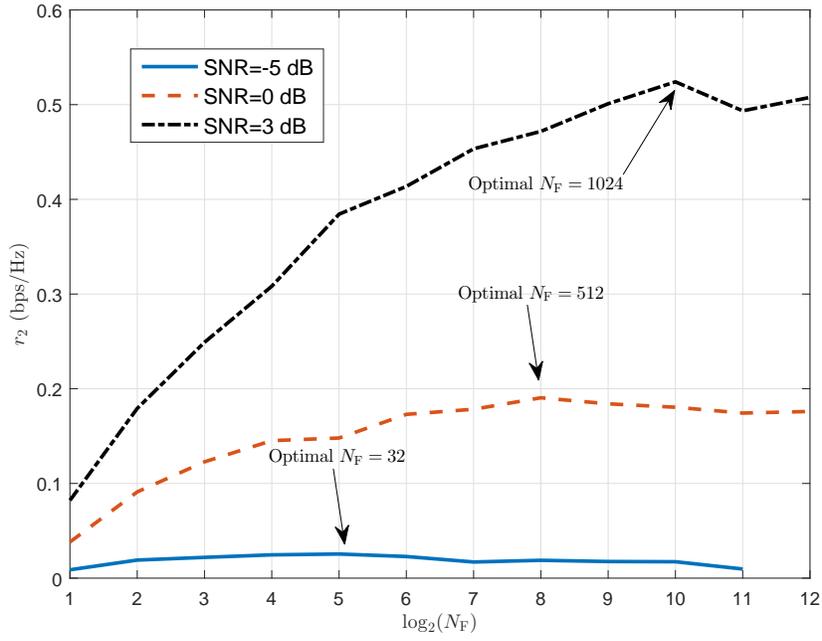} 
\centering\caption{The achievable rates contributed by different numbers of subcarriers and the optimal $N_\mathrm{F}$.}
\label{fig_r2_Nfft}
\end{figure}

In multi-cell simulations, thermal noise with power density  $-173$ dBm/Hz \cite{36814} is assumed in the multi-cell network. Also, the bandwidth of each subcarrier is 15 kHz \cite{36814}. In this case, the noise power per subcarrier can be calculated by $\sigma_\mathrm{N}^2=7.5\times 10^{-11}$ W. 

Fig. \ref{fig_IM_CDF} illustrates the CDFs of SINR in terms of different values of $N_\mathrm{F}$ in the multi-cell OFDM-IM scenario. The SINR is larger when $N_\mathrm{F}$ is larger, because the target UE has a smaller probability of being interfered, where ICI is more spreaded in the frequency domain. In addition, the simulated results reasonably well align with analytic results.

\begin{figure} 
\centering\includegraphics[width=5in]{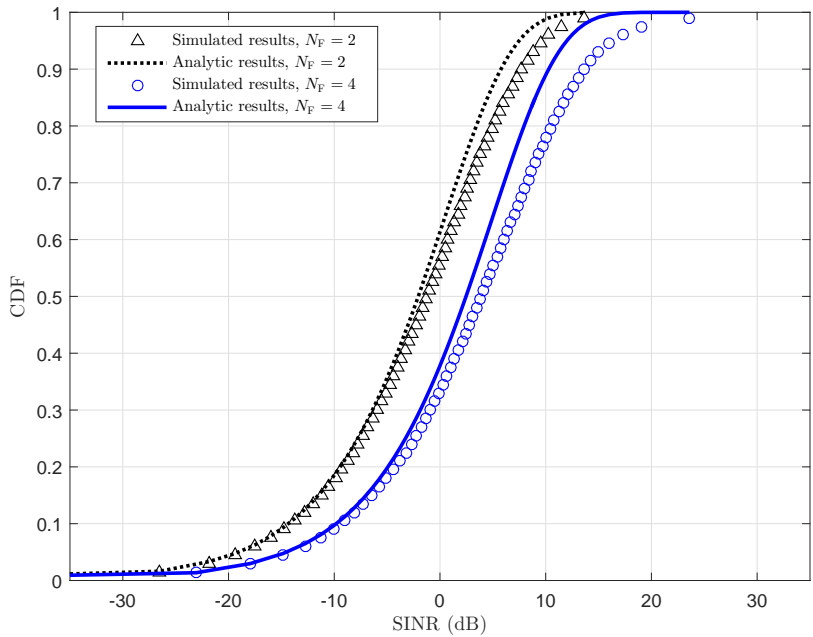} 
\centering\caption{Comparison of CDFs of SINR with respect to different numbers of subcarriers of multi-cell OFDM-IM ($\lambda=10^{-4}$, $\alpha=3$, $P_T=40$W, $d=50$m).}
\label{fig_IM_CDF}
\end{figure}

The PDF of the real part of noise plus ICI of multi-cell OFDM-IM with 4QAM inputs is illustrated in Fig. \ref{fig_FQAM_Interf_PDF}. The total number of BSs is $19$ and the inter site distance (ISD) is 100m. It be can observed that the MoG distribution derived in this paper fits simulation excellently.
The generalized Gaussian model proposed in \cite{Hong14}\cite{Seol09} and the Gaussian model are also shown in Fig. \ref{fig_FQAM_Interf_PDF}. However, these two models fail to match the realistic noise plus ICI well.
 The generalized Gaussian model is able to capture the peak while it does not model the spread well. On the other hand, the Gaussian model has better alignment with the PDF spread than the generalized Gaussian model, but it is not accurate to model the peak.

\begin{figure} 
\centering\includegraphics[width=5in]{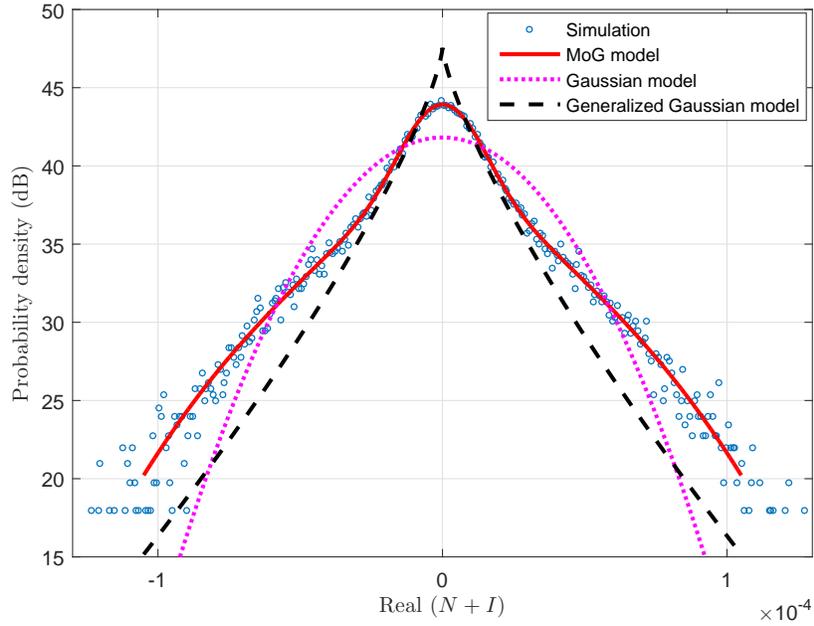} 
\centering\caption{PDF of the real part of noise plus ICI of multi-cell OFDM-IM with 4QAM inputs ($N_\mathrm{B}=19$, $\alpha=3$, $P_T=40$W, $d=50$m, $Q'=4$).}
\label{fig_FQAM_Interf_PDF}
\end{figure}

The upper bound of sum rates of multi-cell OFDM-IM with 4QAM is depicted in Fig. \ref{fig_Index_Modulation_multicell_sumrate}. Single cell achievable rates with Gaussian input and 4QAM are included as reference. It can be observed that the sum rate of multi-cell OFDM-IM is at least approximately 20\% worse than single cell results because of ICI. When the SNR is smaller, although the multi-cell result outperforms the other two single cell results, this is caused by the upper limit. 
\begin{figure} 
\centering\includegraphics[width=5in]{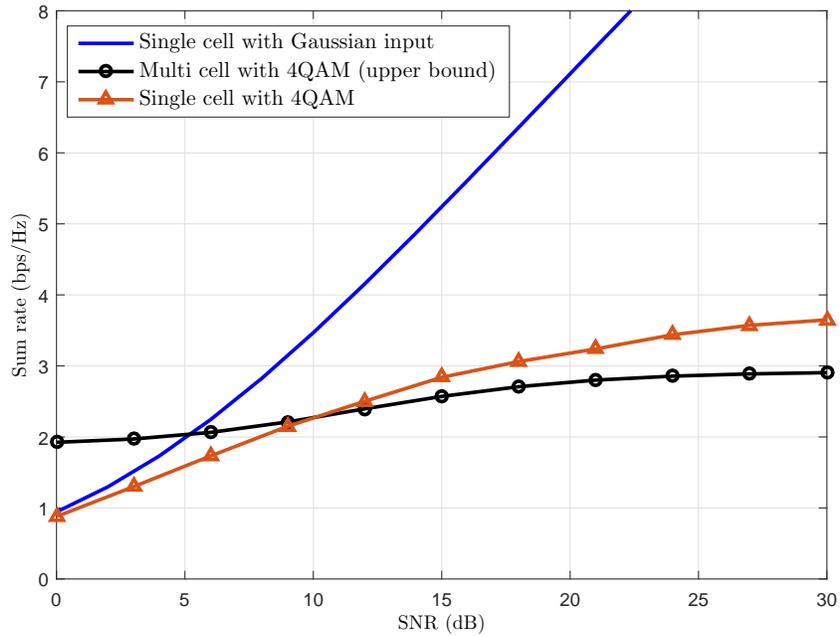} 
\centering\caption{Upper bound of sum rates of multi-cell OFDM-IM ($N_\mathrm{B}=19$, $N_\mathrm{F}=4$).}
\label{fig_Index_Modulation_multicell_sumrate}
\end{figure}

\section{Conclusions} \label{sec_conclusion_section}
In this paper, single cell achievable rate and multi-cell statistical properties of SINR, ICI, and sum rate for OFDM-IM have been studied. It has been shown that the increase of the number of subcarriers does not improve the single cell achievable rate in the low SNR regime. The main benefit of more subcarriers appears in multi-cell scenarios, where less BSs will be interfering the target UE. In addition, the PDF of noise plus ICI has been derived for OFDM-IM with QAM input, showing that noise plus ICI follows a MoG distribution. The parameters of the MoG distribution have been estimated by a simplified EM algorithm in this paper. Later, upper bound of sum rate of multi-cell OFDM-IM has been derived, which can be used as performance guideline for OFDM-IM networks. For future work, it will be practical to use the PDF of noise plus ICI to develop multi-cell OFDM-IM signal detection algorithms. Also, the analysis method in this paper can be extended to $N_{\mathrm{F}}$-ary asymmetric channel to investigate achievable rates of generalized OFDM-IMs.

\section{Acknowledgement}
This work has been performed in the framework of the
Horizon 2020 project FANTASTIC-5G (ICT-671660) receiving
funds from the European Union. The authors would like
to acknowledge the contributions of their colleagues in the
project, although the views expressed in this contribution are
those of the authors and do not necessarily represent the
project.

%

\end{document}